\documentclass[sigconf]{aamas}

\setcopyright{ifaamas}
\acmDOI{}
\acmISBN{}
\acmConference[AAMAS'27]{Proc.\@ of the 26th International Conference on
Autonomous Agents and Multiagent Systems (AAMAS 2027)}{2027}{}
\acmYear{2027}
\copyrightyear{2027}
\acmPrice{}

\usepackage{booktabs}
\usepackage{multirow}
\usepackage{amsmath}
\usepackage{graphicx}
\usepackage{xcolor}
\usepackage{subcaption}
\usepackage{url}
\usepackage{float}

\begin{document}

\title{Anchor-and-Resume Concession Under Dynamic Pricing for LLM-Augmented Freight Negotiation}

\author{Hoang Nguyen}
\affiliation{
  \institution{Georgia Institute of Technology}
  \city{Atlanta}
  \state{Georgia}
  \country{USA}
}
\email{hnguyen433@gatech.edu}

\author{Lu Wang}
\affiliation{
  \institution{Transportation Insight / NTG}
  \city{Atlanta}
  \state{Georgia}
  \country{USA}
}
\email{luwang@t-insight.com}

\author{Marta Gaia Bras}
\affiliation{
  \institution{Transportation Insight / NTG}
  \city{Atlanta}
  \state{Georgia}
  \country{USA}
}
\email{mbras@t-insight.com}

\begin{abstract}
Freight brokerages negotiate thousands of carrier rates daily under dynamic pricing conditions where models frequently revise targets mid-conversation. Classical time-dependent concession frameworks use a fixed shape parameter $\beta$ that cannot adapt to these updates. Deriving $\beta$ from the live spread enables adaptation but introduces a new problem: a pricing shift can cause the formula to retract a previous offer, violating monotonicity and signaling bad faith. LLM-powered brokers offer flexible negotiation but require expensive reasoning models, produce non-deterministic pricing decisions, and remain vulnerable to prompt injection from carrier messages.

We propose a two-index anchor-and-resume framework that addresses both limitations. A spread-derived $\beta$ automatically maps each load's margin structure to the correct concession posture, while the anchor-and-resume mechanism guarantees monotonically non-decreasing offers under arbitrary pricing shifts. All pricing decisions remain in a deterministic formula; the LLM, when used, serves only as a natural language translation layer. Empirical evaluation across 115{,}125 negotiations shows that the adaptive $\beta$ tailors behavior by regime: in narrow spreads, the framework concedes quickly to prioritize deal closure, carrier retention, and load coverage; in medium and wide spreads, it matches or exceeds the best fixed-$\beta$ baselines in broker savings. Against an unconstrained 20-billion-parameter LLM broker, the framework achieves similar agreement rates and savings. Against LLM-powered carriers as more realistic stochastic counterparties, the framework maintains comparable savings and higher agreement rates than against rule-based opponents, confirming that performance generalizes beyond scripted scenarios. By decoupling the LLM from pricing logic, the framework scales horizontally to thousands of concurrent negotiations with negligible inference cost and transparent decision-making.
\end{abstract}

\keywords{anchor-and-resume concession, monotonic offer guarantee, spread-derived beta, deterministic negotiation engine, freight rate negotiation, LLM-augmented agents}

\maketitle

\section{Introduction}
\label{sec:intro}

The US freight brokerage market is valued at \$19.7 billion in 2025~\cite{mordor2025}. Brokers act as intermediaries between shippers and carriers, negotiating rates across thousands of loads daily. For each load, the broker has a price band $[r_{\min}, r_{\max}]$ set by an internal pricing model and must negotiate a settlement rate with the carrier. The broker's objective is to minimize the agreed rate (maximizing margin), while the carrier seeks to maximize it. In production, these pricing models ingest live market signals and can update their estimates at any time. For example, the 2026 Strait of Hormuz disruption saw diesel futures surge 14\% in a single day as commercial shipping through the strait halted~\cite{hormuz2026}, forcing brokerages to revise carrier rate targets across hundreds of active negotiations simultaneously. A negotiation agent that cannot adapt its concession behavior in response to such updates will either overpay on deals where the target has tightened or lose deals where it has loosened.

Faratin et al.~\cite{faratin1998} formalized the concession curve through a shape parameter $\beta$ that determines how aggressively the agent concedes over time (Section~\ref{sec:adaptive_beta}). However, a single fixed $\beta$ cannot be universally optimal. Loads with narrow price spreads (1--2\%) leave little room, so the broker should concede quickly to secure coverage. Loads with wide spreads (15\%+) offer substantial room, so the broker should hold firm. The problem is compounded in production settings where the pricing model can update its estimate \emph{during} an active negotiation, changing the effective spread. A fixed $\beta$ cannot respond to these updates. An alternative is to delegate pricing decisions entirely to an LLM, but this introduces per-round inference cost that limits horizontal scaling, non-deterministic outputs that complicate auditability, and operational dependence on a single model provider.

We make four contributions:

\begin{enumerate}
    \item \textbf{Spread-derived $\beta$ for scalable deployment.} We replace Faratin's fixed $\beta$ with $\beta = c / (s \times 100)$, where $s$ is the fractional spread from the pricing model. This automatically maps each load's margin structure to the correct concession posture (Conceder for narrow spreads, Boulware for wide spreads), eliminating per-negotiation tuning and enabling platform-wide deployment across hundreds of thousands of annual deals (Section~\ref{sec:adaptive_beta}).

    \item \textbf{Two-index anchor-and-resume framework.} When a pricing shift occurs mid-negotiation, the framework decouples the negotiation round from the Faratin curve position, allowing the agent's concession behavior to adapt while guaranteeing monotonically non-decreasing offers. We prove this construction prevents retractions under arbitrary shift sequences, preserving good-faith negotiation and dampening pricing pipeline volatility from the carrier's perspective (Section~\ref{sec:two_index}).

    \item \textbf{LLM-agnostic architecture.} The deterministic strategy engine computes all pricing decisions; the LLM serves only as a natural language translation layer. This decoupling enables operators to switch LLM providers if one fails, use a cheaper model without affecting negotiation quality, and maintain full auditability of every pricing decision (Section~\ref{sec:llm_results}).

    \item \textbf{Empirical validation on business metrics.} We evaluate on three metrics that map to business outcomes: margin capture (broker savings), carrier retention (agreement rate), and system throughput (negotiation rounds). In three experiments totaling 115{,}125 negotiations under dynamic pricing, the two-index strategy achieves zero retractions across all conditions: a rule-based evaluation (105{,}000 negotiations across 12 spread values), a comparison against a 20-billion-parameter unconstrained LLM broker (3{,}375 negotiations), and a robustness test against LLM-powered carrier agents (6{,}750 negotiations). The framework matches LLM performance while operating as a lightweight deterministic formula with negligible inference cost, simple prompt instructions, and full auditability (Sections~\ref{sec:results}--\ref{sec:llm_results}).
\end{enumerate}

\section{Related Work}
\label{sec:related}

\subsection{Automated Negotiation and the Beta Problem}

Faratin et al.~\cite{faratin1998} introduced a taxonomy of negotiation tactics for autonomous agents, with time-dependent tactics using a concession function $\alpha(t) = t^{1/\beta}$ where $\beta$ controls the curve shape. The original work left $\beta$ as a design-time constant chosen by the system builder, explicitly identifying its selection as an open problem. Over the following 25 years, the field developed several approaches to address this gap, with modern methods largely focusing on adapting $\beta$ based on \emph{opponent behavior}.

Cao et al.~\cite{cao2015} proposed a multi-strategy selection algorithm that classifies the opponent's concession rate as accelerating, decelerating, or steady, and adjusts the agent's concession curve accordingly. Hindriks and Tykhonov~\cite{hindriks2008} applied Bayesian learning to estimate the opponent's preference profile (issue weights and evaluation functions) from observed bids, then used the learned model to select Pareto-efficient offers at the agent's own target utility level. The ANAC competition~\cite{baarslag2013} produced agents such as HardHeaded and Gahboninho that model opponent preferences from observed bids and dynamically adjust their concession schedules. More recently, reinforcement learning approaches~\cite{bagga2020} treat the concession parameter as a learnable variable optimized over thousands of simulated negotiations. Fatima et al.~\cite{fatima2014} provide a comprehensive treatment of the game-theoretic foundations underlying these approaches.

These methods adapt negotiation behavior based on \emph{opponent} actions and generally assume static reservation values. Our framework instead derives $\beta$ from the \emph{pricing model's} assessment of market conditions, which can change mid-negotiation when an external system updates parameters. When $\beta$ is derived from the live spread, such shifts can cause the concession formula to produce an offer lower than the previous one, violating the monotonic concession protocol. Endriss~\cite{endriss2006} formalized monotonic concession as a protocol requirement for bilateral bargaining. Winoto~\cite{winoto2005} introduced a non-monotonic-offers protocol and showed it generates higher surplus under full rationality, but the gains diminish as agent rationality decreases. In freight brokerage, where neither party observes the other's reservation value, retractions are interpreted as bad faith. Gear et al.~\cite{gear2020} addressed this problem directly with PredictRV, which uses statistical prediction models (Bayesian regression, LSTM) to smooth offers when the agent's reservation value changes during negotiation. Their approach reduces erratic offer behavior but provides no formal monotonicity guarantee, and it handles changes to the reservation value only while the concession shape parameter remains fixed. Our two-index framework guarantees monotonicity under arbitrary shift sequences that change both the concession range and the shape parameter simultaneously.

\subsection{AI-Driven Negotiation Agents}

A separate line of work applies neural and language models directly to negotiation. Lewis et al.~\cite{lewis2017} introduced end-to-end neural negotiation on the Deal-or-No-Deal task. Fu et al.~\cite{fu2023} showed that LLM agents can improve through self-play and in-context learning from AI feedback. Bianchi et al.~\cite{bianchi2024} introduced NegotiationArena, finding that behavioral tactics (e.g., pretending to be desperate) can improve payoffs by 20\%. Abdelnabi et al.~\cite{abdelnabi2024} introduced a multi-agent negotiation testbed and found that GPT-4's deal success rates degrade significantly when greedy or adversarial agents are present. In the supply chain domain specifically, Kirshner et al.~\cite{kirshner2026} found that LLM agents exhibit human-like bargaining behavior but are more inclined toward agreement, improving efficiency at the cost of equity.

These works treat the LLM as the decision-maker: the model chooses both the language and the negotiation posture. He et al.~\cite{he2018} showed that decoupling high-level strategy from language generation avoids the degeneracy of end-to-end reinforcement learning while achieving higher agreement rates. Our framework builds on this principle, replacing their learned dialogue acts with a deterministic strategy engine grounded in Faratin's concession theory. The strategy engine computes all numeric decisions; the LLM, when used, handles language generation only.

\section{Framework}
\label{sec:framework}

\subsection{Bilateral Negotiation Model}

We model freight rate negotiation as a single-issue bilateral alternating-offers game between a broker agent $B$ and a carrier agent $C$, where the sole negotiated variable is the linehaul rate. Each negotiation concerns a single load $\ell$ with parameters $r_{\min}$ (minimum rate), $r_{\max}$ (maximum rate), and an optional $r_{\text{target}}$ from the pricing pipeline. The zone of possible agreement (ZOPA) is $[r_{\min}, r_{\max}]$. The negotiation proceeds for up to $T$ rounds, ending when one side accepts, walks away, or the deadline is reached.

\subsection{Concession Model and Adaptive Beta}
\label{sec:adaptive_beta}

Following the simplified form of Faratin et al.'s~\cite{faratin1998} time-dependent tactic (with time normalized to $t \in [0, 1]$), the broker's offer is:
\begin{equation}
    \text{offer}_{\text{TD}}(t) = r_{\min} + \alpha(t) \cdot (r_{\text{target}} - r_{\min})
    \label{eq:offer}
\end{equation}
where $\alpha(t) = t^{1/\beta}$ is the concession function and $r_{\text{target}}$ is the pricing model's current target rate (defaulting to $r_{\max}$ when unavailable). The broker concedes from $r_{\min}$ toward $r_{\text{target}}$, not $r_{\max}$; the maximum rate $r_{\max}$ serves only as a hard walk-away ceiling. The parameter $\beta$ controls the concession shape: $\beta < 1$ is Boulware (hold firm), $\beta = 1$ is linear, and $\beta > 1$ is Conceder (concede quickly). In the original formulation, $\beta$ is a fixed constant chosen before the negotiation begins. We replace it with a function of the pricing model's current spread:
\begin{equation}
    \beta = \frac{c}{s \times 100}
    \label{eq:adaptive_beta}
\end{equation}
where $s = (r_{\text{target}} - r_{\min}) / r_{\min}$ is the fractional spread between the pricing model's target rate and the floor, and $c > 0$ is a calibration constant. When $r_{\text{target}}$ is not available, $r_{\max}$ is used as the default. The constant $c$ is the only free parameter in the framework and represents the spread percentage at which the strategy transitions from Conceder ($\beta > 1$) to Boulware ($\beta < 1$). It can be set by domain expertise (choosing the spread below which deal closure is prioritized over margin capture) or from historical data (e.g., training a settlement classifier and using SHAP analysis~\cite{lundberg2017} to identify the spread threshold at which the feature's contribution crosses zero). In this study we set $c = 3$ throughout, placing the transition at 3\% spread. We evaluate across three spread regimes: Narrow ($S \leq 4\%$, $\beta = 3.0$, Conceder), Medium ($4\% < S \leq 8\%$, $\beta = 1.0$, Linear), and Wide ($S > 8\%$, $\beta = 0.4$, Boulware).

\subsection{Two-Index Anchor-and-Resume Framework}
\label{sec:two_index}

When $r_{\text{target}}$ shifts mid-negotiation, both the concession range and $\beta$ update simultaneously. The shape effect can dominate, causing the new offer to fall below the previous one. Such an \emph{offer retraction} violates the monotonic concession property standard in bilateral negotiation~\cite{endriss2006} and signals bad faith to the carrier. To prevent retraction, we decouple the negotiation round counter from the position on the Faratin curve using two separate indices:

\begin{itemize}
\item A \textit{negotiation index} $t \in \{1, 2, \ldots, T\}$: the actual round counter, always advancing by 1.
\item A \textit{Faratin index} $\tau \in \{1, 2, \ldots\}$: a virtual position on the current concession curve. Equal to $t$ when no shift has occurred; may differ from $t$ (and may exceed $T$) after a shift.
\end{itemize}

\noindent The complete framework is defined piecewise. When a dynamic pricing shift occurs at round $k$ with new parameters $\beta_{\text{new}}$, $R_{\text{new}} = r_{\text{target}}^{\text{new}} - r_{\min}$:

\begin{equation}
\label{eq:framework}
\text{offer}(t) =
\begin{cases}
\text{offer}(k{-}1) & \text{if } \text{offer}(k{-}1) > r_{\text{target}}^{\text{new}} \\[4pt]
\min\bigl(f(\tau(t)),\; r_{\text{target}}^{\text{new}}\bigr) & \text{otherwise}
\end{cases}
\end{equation}
\noindent where $f(\tau) = r_{\min} + (\tau/T)^{1/\beta_{\text{new}}} \cdot R_{\text{new}}$ is the Faratin curve evaluated at virtual position $\tau$, and $\tau(t) = \tau_0 + (t - k)$ maps the negotiation round $t \in \{k, k{+}1, \ldots, T\}$ to the virtual position. The anchor point $\tau_0 = \lceil\, T \cdot \alpha_0^{\,\beta_{\text{new}}} \rceil$ is the \emph{ceiling} of the real-valued position on the new curve whose offer equals the broker's last offer, where $\alpha_0 = (\text{offer}(k{-}1) - r_{\min}) / R_{\text{new}}$. The ceiling ensures $\tau_0 \geq T \cdot \alpha_0^{\,\beta_{\text{new}}}$, so the first post-shift offer is at least as large as the previous one (see Proposition~\ref{prop:mono}). Because $\tau$ is a virtual index, it may exceed $T$ after a shift; this is by design, as the $\min$ clamp in Case~2 caps the offer at $r_{\text{target}}^{\text{new}}$ regardless of $\tau$.

\begin{itemize}
\item \textit{Case 1} (hold): the broker has already conceded past its new target and cannot retract; it holds at its last offer.
\item \textit{Case 2} (anchor-and-resume): the anchor step finds the virtual position on the new curve that matches (or slightly exceeds) the broker's most recent offer, then resumes concession from there; the $\min$ clamp ensures the offer never exceeds $r_{\text{target}}^{\text{new}}$.
\end{itemize}

\noindent Acceptance is handled separately by the scoring function $V(x) = (r_{\text{target}} - x) / R$: the broker accepts the carrier's offer when $V(o_c) \geq V(\text{offer}(t))$, i.e., when the carrier's rate scores at least as well as the broker's own counter-offer. This is the same acceptance criterion used by all fixed-$\beta$ strategies.

\subsection{Formal Properties}
\label{sec:properties}

\begin{proposition}[Monotonicity]
\label{prop:mono}
If a dynamic shift occurs at round~$k$, then $\mathit{offer}(k) \geq \mathit{offer}(k{-}1)$.
\end{proposition}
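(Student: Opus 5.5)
The plan is a case split that mirrors the piecewise definition in Eq.~\eqref{eq:framework}, evaluated at $t = k$. Write $o = \mathit{offer}(k{-}1)$, $R = R_{\text{new}} > 0$ and $\beta = \beta_{\text{new}} > 0$. We assume $o \ge r_{\min}$: every offer produced by the Faratin curve or by a hold step lies at or above the floor.

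\emph{Case 1} is $o > r_{\text{target}}^{\text{new}}$. The framework then sets $\mathit{offer}(k) = o$, so the inequality holds with equality and nothing further is needed.

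\emph{Case 2} is $o \le r_{\text{target}}^{\text{new}}$. At $t = k$ we have $\tau(k) = \tau_0$, so
\[
\mathit{offer}(k) = \min\bigl(f(\tau_0),\, r_{\text{target}}^{\text{new}}\bigr).
\]
The hypothesis already gives $r_{\text{target}}^{\text{new}} \ge o$. It therefore suffices to show $f(\tau_0) \ge o$, because the minimum of two quantities that are each at least $o$ is itself at least $o$.

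To show $f(\tau_0) \ge o$, introduce the real number $\tau^\ast = T\alpha_0^{\beta}$, with $\alpha_0 = (o - r_{\min})/R$. The Case~2 hypothesis together with $o \ge r_{\min}$ gives $\alpha_0 \in [0,1]$, so $\tau^\ast$ is well defined.
\begin{itemize}
\item Extend $f$ to real arguments $\tau \ge 0$ by the same formula. Then $(\tau^\ast/T)^{1/\beta} = \alpha_0$, so $f(\tau^\ast) = r_{\min} + \alpha_0 R = o$ exactly.
\item The map $\tau \mapsto (\tau/T)^{1/\beta}$ is non-decreasing on $[0,\infty)$ because $1/\beta > 0$. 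Since $R > 0$, $f$ is non-decreasing.
\item By the ceiling, $\tau_0 = \lceil \tau^\ast \rceil \ge \tau^\ast$. Hence $f(\tau_0) \ge f(\tau^\ast) = o$, which completes Case~2.
\end{itemize}

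The main obstacle is bookkeeping rather than analysis. First, we must confirm $R_{\text{new}} > 0$, which requires $r_{\text{target}}^{\text{new}} > r_{\min}$. If $R_{\text{new}} = 0$, the degenerate spread makes $\beta$ undefined, and we would treat it separately by noting $o \ge r_{\min} = r_{\text{target}}^{\text{new}}$. Second, we must check $\alpha_0 \ge 0$ so that the real power $\alpha_0^{\beta}$ is meaningful. Third, $\tau_0$ may be $0$ when $\alpha_0 = 0$ or may exceed $T$; this is harmless because $f$ is monotone on all of $[0,\infty)$ and the $\min$ clamp handles overshoot. We would state these standing assumptions explicitly before running the two cases.
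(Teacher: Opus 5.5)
Your proof is correct and rests on the same key step as the paper's: the real-valued anchor $\tau^\ast = T\alpha_0^{\beta}$ satisfies $f(\tau^\ast) = \mathit{offer}(k{-}1)$, and the ceiling together with monotonicity of $\tau \mapsto (\tau/T)^{1/\beta}$ gives $f(\lceil \tau^\ast \rceil) \geq \mathit{offer}(k{-}1)$. Your explicit treatment of Case~1 and of the $\min$ clamp is slightly more complete than the paper, which writes $\mathit{offer}(k) - \mathit{offer}(k{-}1)$ directly as the bracketed difference times $R_{\text{new}}$ and addresses neither; one small correction to your closing remarks is that in Case~2, $\alpha_0 \le 1$ forces $\tau^\ast \le T$ and hence $\tau_0 \le T$ (as $T$ is an integer), so the clamp is in fact inactive at $t = k$ and only the later positions $\tau(t)$ can exceed $T$.
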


\begin{proof}
The anchor step sets
\[
\alpha_0 = \frac{\text{offer}(k{-}1) - r_{\min}}{R_{\text{new}}},\quad
\tau_0^{*} = T \cdot \alpha_0^{\,\beta_{\text{new}}}
\]
so $(\tau_0^{*}/T)^{1/\beta_{\text{new}}} = \alpha_0$ by construction. With $\tau_0 = \lceil \tau_0^{*} \rceil$:
\begin{equation*}
\resizebox{0.88\columnwidth}{!}{$\displaystyle
\text{offer}(k) - \text{offer}(k{-}1)
= \!\left[\left(\frac{\lceil \tau_0^{*} \rceil}{T}\right)^{\!\!1/\beta_{\text{new}}} \!\!- \left(\frac{\tau_0^{*}}{T}\right)^{\!\!1/\beta_{\text{new}}}\right] R_{\text{new}}
$}
\end{equation*}
Since $\lceil \tau_0^{*} \rceil \geq \tau_0^{*}$ and $f(x) = (x/T)^{1/\beta}$ is non-decreasing for $\beta > 0$, the bracketed term is $\geq 0$. Since $R_{\text{new}} > 0$, $\text{offer}(k) \geq \text{offer}(k{-}1)$.
\end{proof}

\noindent \textit{Boundedness.} The $\min$ clamp in Case~2 ensures $\text{offer}(t) \leq r_{\text{target}}^{\text{new}}$ for all $t$. Once $\tau(t) \geq T$, the unclamped offer would exceed $r_{\text{target}}^{\text{new}}$, but the clamp holds it there. Case~1 ensures the broker never retracts when $r_{\text{target}}$ drops below the current offer.

\noindent \textit{Reduction to Faratin.} When no shift occurs, $\tau = t$ and the framework reduces to standard Faratin (Equation~\ref{eq:offer}). When $\beta$ and $r_{\text{target}}$ are constant across all rounds, the two-index framework is equivalent to the classical time-dependent concession model.

\noindent \textit{Composability.} For an arbitrary sequence of shifts at rounds $k_1 < k_2 < \cdots < k_n$, monotonicity holds by induction.
\textit{Base case:} Proposition~\ref{prop:mono} guarantees $\text{offer}(k_1) \geq \text{offer}(k_1{-}1)$; between shifts, the Faratin curve is non-decreasing, so $\text{offer}(t) \geq \text{offer}(t{-}1)$ for $t \in (k_1, k_2)$.
\textit{Inductive step:} suppose all offers up to round $k_i{-}1$ are non-decreasing. At round~$k_i$, the anchor-and-resume mechanism takes $\text{offer}(k_i{-}1)$ as input and produces $\text{offer}(k_i) \geq \text{offer}(k_i{-}1)$ by Proposition~\ref{prop:mono}. Since the proof depends only on the most recent offer and the new curve parameters, not on how $\text{offer}(k_i{-}1)$ was produced, the induction carries through for all~$n$ shifts.

\section{Experimental Design}
\label{sec:experiments}

We evaluate the two-index framework in three stages. First, a rule-based evaluation (Section~\ref{sec:results}) compares five deterministic strategies across 12 $S$ values spanning the full spread range, providing statistical coverage at scale. Second, a comparison with an unconstrained LLM broker at three representative $S$ values benchmarks the two-index strategy against a 20-billion-parameter language model under identical conditions. Third, a robustness experiment replaces the algorithmic carriers with LLM-powered agents to test whether the framework's performance holds against stochastic, language-capable counterparties. The second and third experiments are presented together in Section~\ref{sec:llm_results}. All loads are synthetically generated; no proprietary operational data was used. The synthetic design is informed by domain expertise from a production freight brokerage, and the calibration constant $c$ is the sole parameter intended for tuning against real operational data (e.g., by training a settlement classifier on historical outcomes and using SHAP analysis to identify the spread threshold at which concession posture should shift). Synthetic loads set $r_{\text{target}}$ at the midpoint of $[r_{\min}, r_{\max}]$, ensuring that $R = r_{\text{target}} - r_{\min}$ equals half the spread for every load.

\subsection{Dynamic Pricing Environment}

All negotiations face dynamic pricing shifts. At a randomly chosen round (uniformly sampled from rounds 2 to~7), the orchestrator shifts $r_{\text{target}}$ by $\pm$5--40\%, simulating a pricing pipeline update. The shift schedule is pre-generated per (regime, load, repetition) triple and applied identically to every strategy, ensuring a controlled comparison. For fixed-$\beta$ strategies, the shift changes the concession range but not~$\beta$, which can produce retractions. The two-index mechanism (Section~\ref{sec:two_index}) prevents retractions by construction.

\subsection{Broker Strategies}

Six broker strategies spanning three categories are compared:

\paragraph{Time-dependent strategies.} These compute each offer from the negotiation round and a fixed concession curve, independent of the carrier's behavior.
\begin{itemize}
    \item \textbf{Fixed-$\beta$ Boulware} ($\beta = 0.6$), \textbf{Linear} ($\beta = 1.0$), \textbf{Conceder} ($\beta = 2.0$): Classical Faratin tactics with a fixed $\beta$. Vulnerable to retractions under dynamic pricing shifts because $\beta$ does not recalculate.
    \item \textbf{Two-Index} ($\beta = c / (s \times 100)$, $c = 3$): The proposed framework. Spread-derived $\beta$ determines the concession shape; the anchor-and-resume mechanism (Equation~\ref{eq:framework}) guarantees monotonicity under dynamic shifts.
\end{itemize}

\paragraph{Behavior-dependent strategy.}
\begin{itemize}
    \item \textbf{Generous Tit-for-Tat} (inspired by the generous TFT variant in Nowak and Sigmund's~\cite{nowak1992} analysis of cooperation in heterogeneous populations, where agents cooperate with some probability after opponent defection to escape cycles of mutual defection; we adapt this principle to bilateral negotiation by introducing stochastic unilateral concessions): Mirrors the carrier's absolute dollar concession each round. When the carrier's concession falls below 5\% of the broker's remaining room, the broker makes a unilateral concession of 15\% of remaining room with 30\% probability, breaking deadlock against slow-moving carriers. The specific thresholds (5\%, 15\%, 30\%) are design choices, not derived from the original work. No principled pacing mechanism and no formal monotonicity guarantee.
\end{itemize}

\paragraph{AI-driven strategy.}
\begin{itemize}
    \item \textbf{Unconstrained LLM} (GPT-OSS 20B, a mixture-of-experts model with approximately 3.6B active parameters per forward pass, temperature 0.7): An unconstrained language model that negotiates without any mathematical framework. The LLM receives load parameters ($r_{\min}$, $r_{\max}$, $r_{\text{target}}$), round number, and a deadline-aware instruction. When $r_{\text{target}}$ shifts, the updated target is provided in the prompt context. The LLM has no monotonicity guarantee, no anchor mechanism, and no structured concession curve. Each (load, carrier, shift) triple is run 10 times to capture stochastic variance. In this experiment, the carriers remain the same five algorithmic archetypes; LLM-powered carriers are evaluated separately in Section~\ref{sec:llm_results}.
\end{itemize}

\subsection{Carrier Archetypes}

Five carrier archetypes span the range of negotiation styles a broker encounters in practice. (1)~The \emph{Cooperative} carrier concedes steadily toward cost via a linear descent. (2)~The \emph{Hardliner} barely moves from its opening position and walks away if the broker has not reached 60\% of the range by round~8; its concession follows a cubic curve ($\alpha = t^3$). (3)~The \emph{Tit-for-Tat} carrier mirrors the broker's proportional concession each round. (4)~The \emph{Deadline Exploiter} shows near-zero flexibility in early rounds, then concedes rapidly after round~7 via a quintic curve ($\alpha = t^5$). (5)~The \emph{Anchoring} carrier opens at 95\% of the range, makes one large initial drop, then concedes only 2\% per round; it walks away after round~9 if the broker remains below 50\% of the range. Figure~\ref{fig:carriers} plots each archetype's concession curve over 10 rounds for a representative load ($r_{\min} = \$1{,}800$, $r_{\max} = \$2{,}400$).

\begin{figure}[t]
\centering
\includegraphics[width=\columnwidth]{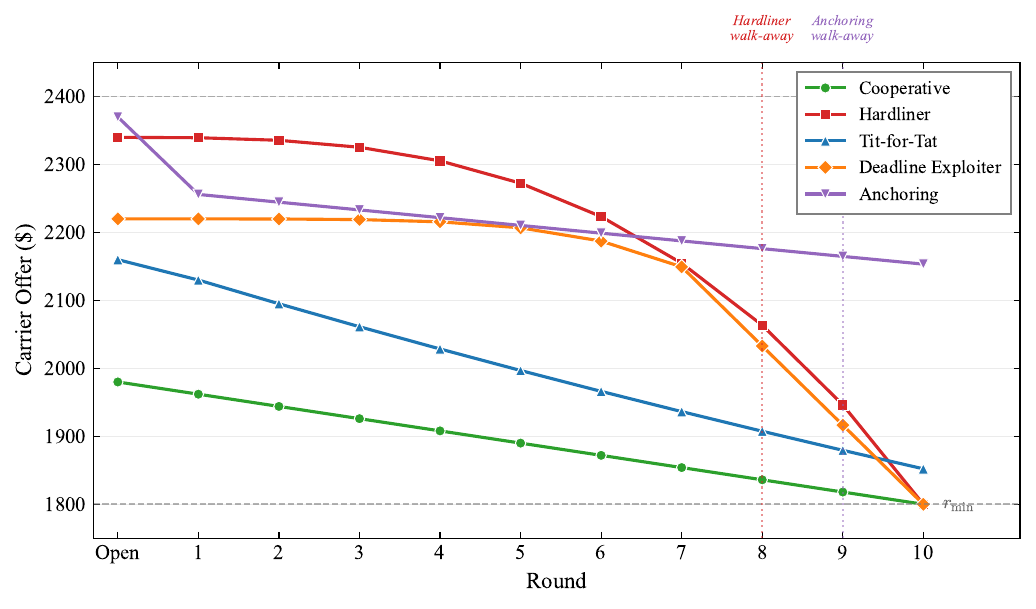}
\caption{Carrier archetype concession curves. Dashed lines mark $r_{\min}$ and $r_{\max}$. Annotations indicate walk-away zones for Hardliner (round~8) and Anchoring (round~9).}
\label{fig:carriers}
\end{figure}

\begin{figure*}[!t]
\centering
\includegraphics[width=\textwidth]{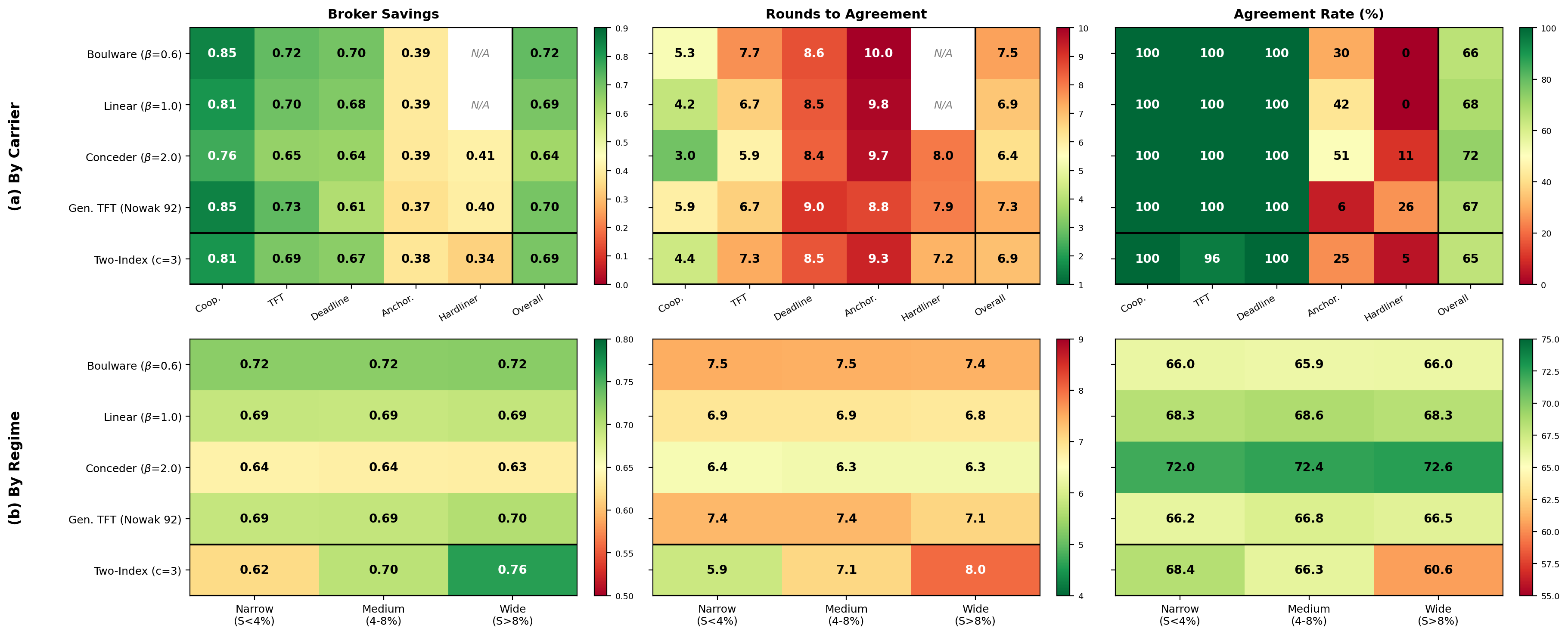}
\caption{Rule-based evaluation: 105{,}000 negotiations (21{,}000 per strategy). (a)~By carrier archetype, with an Overall column. (b)~By spread regime. Black borders highlight the Two-Index row. N/A indicates zero agreements. In the rounds panels, green indicates faster convergence. Fixed-$\beta$ rows are flat across regimes; the Two-Index row adapts posture from Conceder (narrow) to Boulware (wide).}
\label{fig:heatmap}
\end{figure*}

\subsection{Scale and Metrics}

The rule-based evaluation runs $5 \times 5 \times 12 \times 350 = 105{,}000$ negotiations across 12 $S$ values: $S \in \{1, 2, \ldots, 8, 10, 12, 15, 20\}\%$, providing 21{,}000 negotiations per strategy (95\% CI $\pm$0.7pp on agreement rate). Each cell of the design (one strategy $\times$ one carrier $\times$ one $S$ value) contains 350 unique synthetic loads with independently sampled shift schedules. The LLM comparison adds an unconstrained condition at three representative $S$ values ($2\%, 6\%, 15\%$): $5 \times 5 \times 3 \times 15 = 1{,}125$ rule-based negotiations plus $1 \times 5 \times 3 \times 15 \times 10 = 2{,}250$ LLM negotiations (10 repetitions per triple to capture stochastic variance). Every strategy faces identical pricing conditions per (regime, load) pair. The robustness experiment (Section~\ref{sec:llm_results}) adds $1 \times 5 \times 3 \times 15 \times 30 = 6{,}750$ negotiations with the two-index broker against LLM-powered carrier agents, bringing the total to 115{,}125 negotiations.

The primary metrics are: \textit{agreement rate} (fraction reaching agreement within $T = 10$ rounds), \textit{broker savings} $(r_{\max} - r_{\text{agreed}}) / (r_{\max} - r_{\min})$ where $1.0$ is best (computed only for agreed negotiations), \textit{rounds to agreement}, and \textit{retraction rate} (retraction events per negotiation, where a retraction is a round with $\text{offer}(t) < \text{offer}(t{-}1) - \epsilon$, $\epsilon = 0.005$ to filter floating-point noise).

\section{Rule-Based Evaluation}
\label{sec:results}

We evaluate the five rule-based strategies across 105{,}000 negotiations at 12 spread values from $S = 1\%$ to $S = 20\%$, with 350 loads per value (21{,}000 per strategy; 95\% CI $\pm$0.7pp on agreement rate). All negotiations include dynamic pricing shifts of $\pm$5--40\%. Under these conditions, the two-index strategy and Generous TFT both produce zero retractions, while all three fixed-$\beta$ baselines retract at rates proportional to $\beta$ (Conceder 0.262, Linear 0.165, Boulware 0.074 events per negotiation). The two-index framework's zero retraction rate confirms Proposition~\ref{prop:mono} empirically across the full spread range, from $S = 1\%$ (deep Conceder posture) to $S = 20\%$ (deep Boulware posture). Table~\ref{tab:aggregate} summarizes aggregate performance with 95\% confidence intervals. All pairwise comparisons with the two-index strategy use Welch's $t$-test (savings, rounds) or a two-proportion $z$-test (agreement rate). Figure~\ref{fig:heatmap} shows agreement rate and broker savings per strategy and carrier archetype; the following subsections examine these results.

\begin{table}[t]
\centering
\caption{Aggregate results (21{,}000 negotiations per strategy). $\pm$ values are 95\% CIs. Retr.\ = retraction events per negotiation.}
\label{tab:aggregate}
\resizebox{\columnwidth}{!}{%
\begin{tabular}{@{}lcccc@{}}
\toprule
Strategy & Agree (\%) & Savings & Rounds & Retr. \\
\midrule
Boulware ($\beta{=}0.6$)   & 66.0 $\pm$ 0.6 & 0.722 $\pm$ 0.002 & 7.47 $\pm$ 0.03 & 0.074 \\
Linear ($\beta{=}1.0$)     & 68.4 $\pm$ 0.6 & 0.691 $\pm$ 0.002 & 6.87 $\pm$ 0.04 & 0.165 \\
Conceder ($\beta{=}2.0$)   & 72.3 $\pm$ 0.6 & 0.636 $\pm$ 0.002 & 6.36 $\pm$ 0.04 & 0.262 \\
Gen.\ TFT                  & 66.5 $\pm$ 0.6 & 0.695 $\pm$ 0.002 & 7.29 $\pm$ 0.02 & 0.000 \\
\textbf{Two-Index ($c{=}3$)} & \textbf{65.1 $\pm$ 0.6} & \textbf{0.690 $\pm$ 0.002} & \textbf{6.94 $\pm$ 0.04} & \textbf{0.000} \\
\bottomrule
\end{tabular}%
}
\end{table}

\subsection{Two-Index vs.\ Time-Dependent Strategies}

The fixed-$\beta$ baselines exhibit the savings-agreement tradeoff described by Faratin et al.~\cite{faratin1998}: Boulware achieves the highest savings (0.722) at the lowest agreement rate (66.0\%), while Conceder reaches 72.3\% agreement at lower savings (0.636). Importantly, each fixed $\beta$ locks the strategy into a single posture regardless of deal economics. Boulware and Linear fail entirely against Hardliner carriers (0\% agreement) because their slow concession never reaches the carrier's reserve price within the deadline; Conceder closes 10.8\% of Hardliner deals but at reduced savings (0.409). Against Anchoring carriers, Conceder achieves 50.8\% agreement while Boulware manages only 29.9\%.

The two-index strategy adapts its posture per regime through the spread-derived $\beta$, producing regime-specific behavior that no single fixed $\beta$ can replicate. In narrow spreads ($S \leq 4\%$), it behaves as a Conceder ($\beta = 3.0$): 68.4\% agreement with savings of 0.617, prioritizing deal closure when margins are thin. In medium spreads ($4\% < S \leq 8\%$), it transitions to Linear ($\beta = 1.0$): 66.3\% agreement with savings of 0.697. In wide spreads ($S > 8\%$), it shifts to Boulware ($\beta = 0.4$): 60.6\% agreement with savings of 0.764, holding firm when margins allow. This adaptive behavior matches the economic intuition that narrow-spread loads should prioritize carrier retention while wide-spread loads should maximize margin capture. By contrast, the fixed-$\beta$ strategies show flat metrics across all regimes (e.g., Boulware: 66.0\% agreement and 0.722 savings at all three regime groups), forfeiting the opportunity to tailor posture to deal economics. In aggregate, the two-index strategy matches Fixed Linear in savings (0.690 vs.\ 0.691, $p = 0.54$, n.s.) while Linear achieves higher agreement (68.4\% vs.\ 65.1\%, $p < 0.001$) at the cost of 0.165 retractions per negotiation. Boulware achieves higher savings (0.722 vs.\ 0.690, $p < 0.001$) but with retractions (0.074/negotiation) and no regime adaptation. Appendix~\ref{app:curves} (Figure~\ref{fig:curves}) overlays the offer curves for all strategies against each carrier archetype across the three regimes, illustrating how the two-index curve tracks Conceder in narrow spreads and shifts toward Boulware in wide spreads.

\subsection{Two-Index vs.\ Behavior-Dependent Strategy}

Generous TFT~\cite{nowak1992} achieves 66.5\% agreement ($p = 0.002$ vs.\ two-index) with slightly higher savings (0.695 vs.\ 0.690, $p = 0.002$). Its strength lies against Hardliner carriers (26.1\% agreement), the highest of any strategy, because the generosity mechanism makes occasional unilateral concessions that break deadlock. Against Cooperative and TFT carriers, it reaches 100\% agreement with savings of 0.848 and 0.725 respectively. Its weakness is against Anchoring carriers (6.5\%), where imitation mirrors the carrier's small concessions and fails to converge within the deadline.

The two-index strategy shows a complementary profile: lower Hardliner agreement (4.6\%) but substantially higher Anchoring agreement (25.2\%). Against Cooperative and Deadline carriers, both strategies achieve 100\% agreement. In savings, the two-index strategy captures 0.813 against Cooperative (vs.\ TFT's 0.848) and 0.666 against Deadline (vs.\ TFT's 0.611).

Although both strategies produce zero retractions in this experiment, TFT's zero count is fragile: the minimum gap between the broker's offer and the post-shift ceiling is \$0.04 (Hardliner, $S = 1\%$), with 28 Hardliner and 20 Anchoring negotiations falling within \$1.00 of retraction. The generosity mechanism pushes offers to 97\% of the ceiling against stalling opponents; a slightly larger downward shift would trigger a retraction.

\begin{figure*}[!t]
\centering
\includegraphics[width=\textwidth]{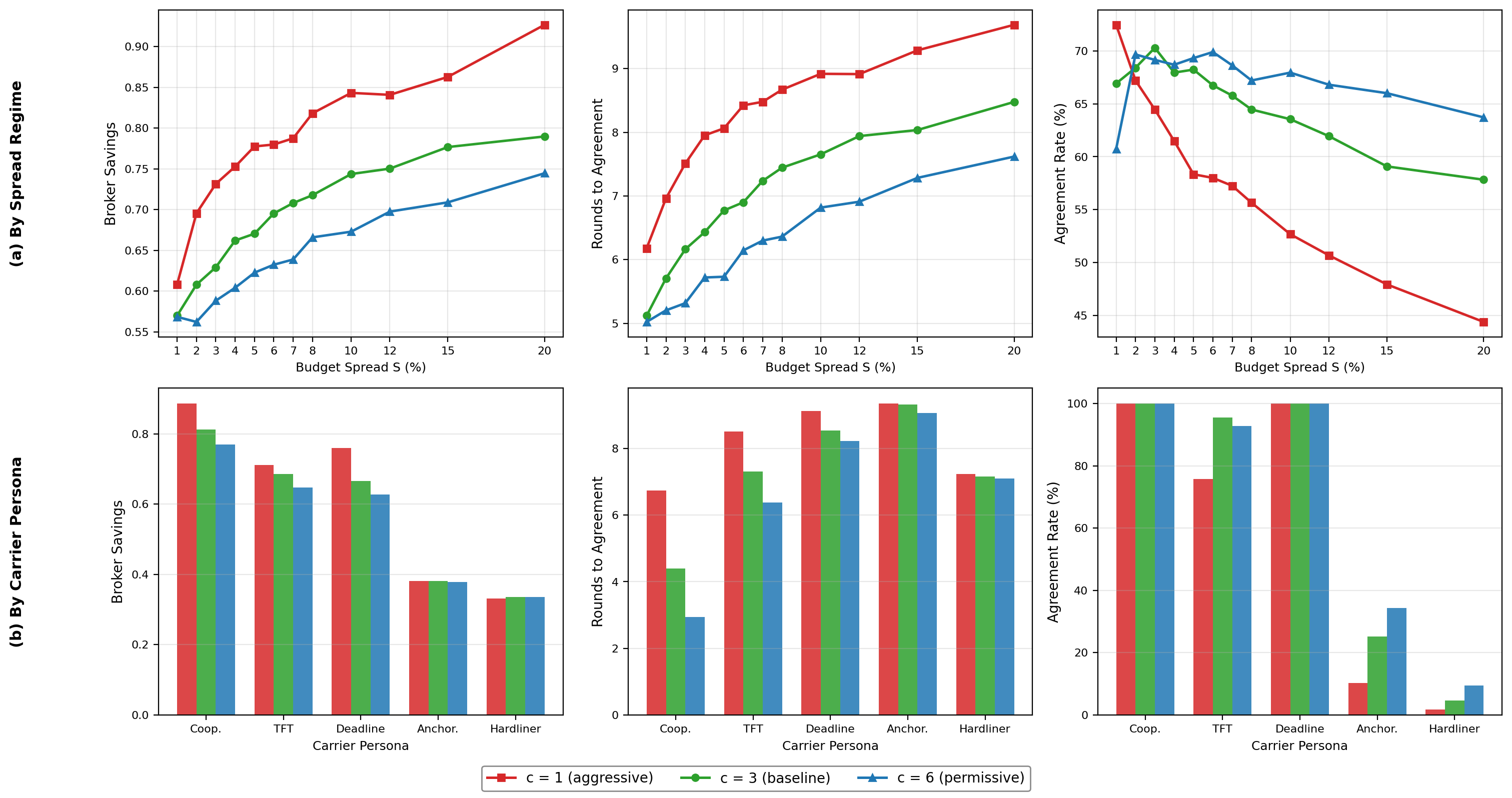}
\caption{Sensitivity of the two-index strategy to calibration constant $c$. (a)~By spread regime across 12 $S$ values. (b)~By carrier persona. Lower $c$ produces more Boulware behavior (higher savings, lower agreement); higher $c$ improves deal closure against adversarial carriers.}
\label{fig:sensitivity_c}
\end{figure*}

\subsection{Case~1 Hold Behavior}

The two-index framework's Case~1 (Equation~\ref{eq:framework}) fires when a downward pricing shift would require retracting a previous offer; the broker holds at its last offer instead. In the evaluation, Case~1 fires in only 3.2\% of two-index negotiations (680 of 21,000), with a mean of 3.7~holds per affected negotiation. Holds concentrate in narrow spreads: 23.3\% of $S{=}1\%$ negotiations trigger at least one hold, dropping to 7.8\% at $S{=}2\%$ and 0\% above $S{=}7\%$. Holds arise almost exclusively from downward shifts (676 of 680 cases) and correlate with negotiation failure: only 29.7\% of hold negotiations reach agreement, compared to 65.1\% overall. This confirms that the monotonicity guarantee comes at the cost of reduced flexibility after downward repricing, but the cost is confined to narrow spreads where margins leave the least room for concession.

\subsection{Sensitivity to Calibration Constant $c$}
\label{sec:sensitivity_c}

The constant $c$ is the only free parameter in the two-index framework. We sweep $c \in \{1, 2, 3, 4, 5, 6\}$ across 126{,}000 negotiations (21{,}000 per value). Figure~\ref{fig:sensitivity_c} shows the per-regime breakdown for $c \in \{1, 3, 6\}$ across all 12 $S$ values. Lower $c$ produces more Boulware behavior: $c = 1$ achieves the highest savings (0.775) but the lowest agreement rate (57.5\%). Higher $c$ produces more Conceder behavior: $c = 6$ reaches 67.3\% agreement but captures only 0.642 in savings. The value $c = 3$ sits at the inflection point, matching Fixed Linear savings (0.690) at 65.1\% agreement. Zero retractions hold at every $c$ value tested.

Figure~\ref{fig:sensitivity_c}(a) shows that the gap between $c = 1$ and $c = 6$ widens as $S$ increases: $c$ has greater impact in wide-spread regimes. Figure~\ref{fig:sensitivity_c}(b) shows the per-carrier breakdown: against Cooperative and Deadline carriers, all $c$ values achieve near-100\% agreement, but lower $c$ captures higher savings (Cooperative: 0.89 at $c = 1$ vs.\ 0.77 at $c = 6$). Against adversarial carriers, higher $c$ improves agreement (Hardliner: 1.8\% at $c = 1$ to 9.4\% at $c = 6$; Anchoring: 10.2\% to 34.3\%).

\section{Comparison with LLM Broker}
\label{sec:llm_results}

\begin{figure*}[!t]
\centering
\includegraphics[width=\textwidth]{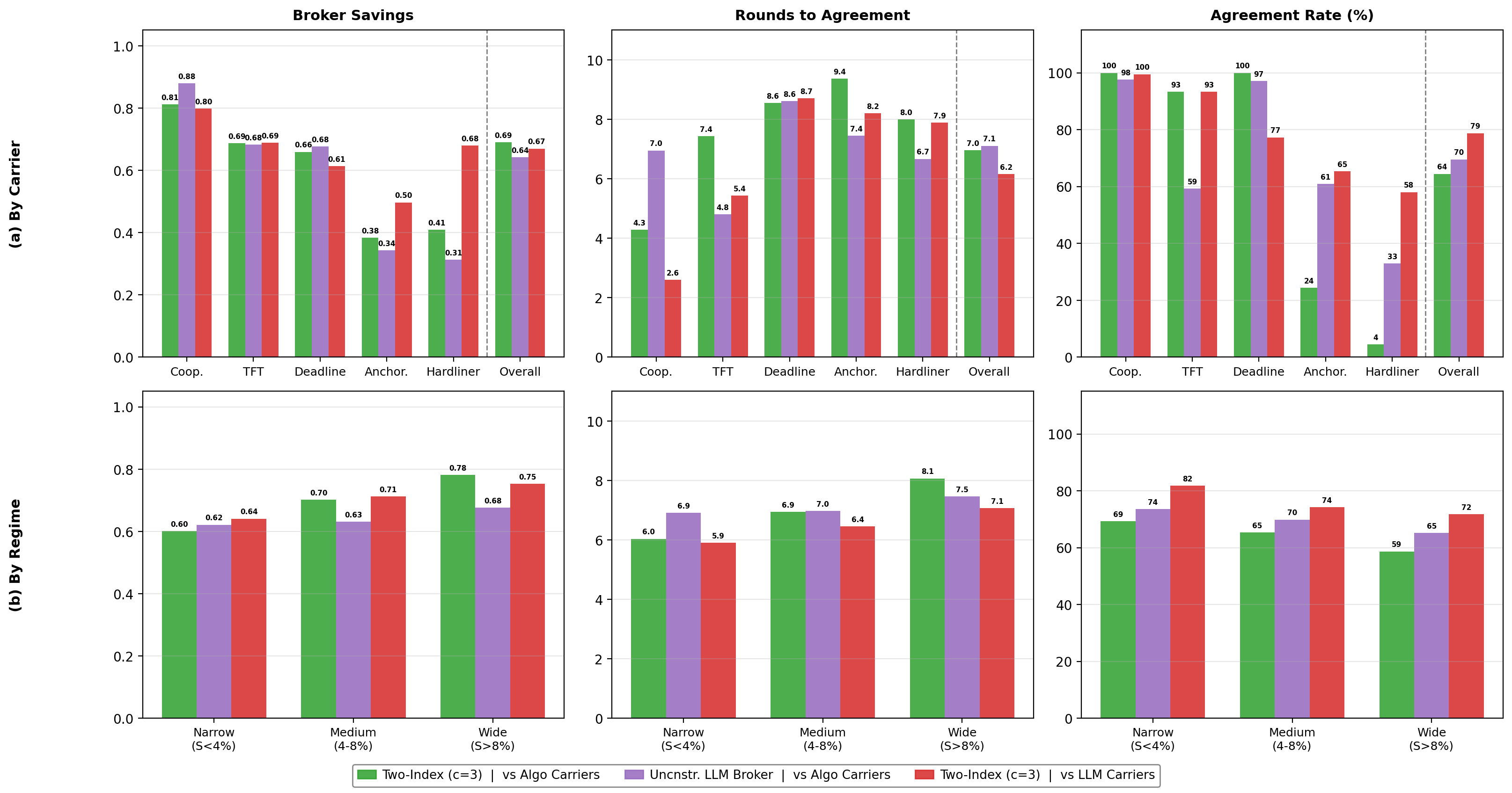}
\caption{Two-Index ($c=3$, green) vs.\ unconstrained LLM broker (purple) vs.\ Two-Index against LLM-powered carriers (red). (a)~By carrier archetype, with Overall column. (b)~By spread regime.}
\label{fig:llm_combined}
\end{figure*}

This section presents two LLM experiments. The first compares the two-index strategy against an unconstrained LLM broker (GPT-OSS 20B) that makes all pricing decisions autonomously, evaluated at three representative spread values ($S = 2\%, 6\%, 15\%$) across 2{,}475 negotiations (225 two-index + 2{,}250 LLM). The second tests robustness by replacing the algorithmic carriers with LLM-powered agents that receive the same persona prompts but generate their own responses (6{,}750 negotiations). Figure~\ref{fig:llm_combined} shows per-carrier and per-regime breakdowns for all three conditions.

In the first experiment, the unconstrained LLM achieves 69.6\% $\pm$ 1.9 overall agreement compared to the two-index strategy's 64.4\% $\pm$ 6.2 ($p = 0.13$, n.s.). The LLM's advantage comes from adversarial carriers: 32.9\% against Hardliner (vs.\ 4.4\%) and 60.9\% against Anchoring (vs.\ 24.4\%). Against cooperative and deadline carriers, the two-index strategy matches or exceeds the LLM (100\% vs.\ 97.6\% and 100\% vs.\ 97.1\%). In broker savings, the two-index strategy is significantly higher (0.690 $\pm$ 0.023 vs.\ 0.642 $\pm$ 0.012, $p < 0.001$). The LLM produces 2.64 retraction events per negotiation compared to zero for the two-index strategy. The system prompt deliberately omits a monotonicity constraint to evaluate the LLM as an unconstrained decision-maker.

These results show that the two-index framework achieves performance comparable to a 20-billion-parameter language model while confining the LLM to message translation rather than pricing decisions. This separation yields four practical advantages: (1)~the translation task allows a smaller model, reducing per-round inference cost; (2)~pricing output is deterministic and reproducible; (3)~operators can switch LLMs without affecting negotiation quality; and (4)~the strategy engine never processes carrier messages, providing structural defense against prompt injection.

In the second experiment, we test robustness against stochastic counterparties by replacing the five algorithmic carriers with LLM-powered agents (GPT-OSS 20B, temperature 0.7) that generate their own responses and concession amounts. The broker remains the two-index strategy ($c = 3$) with template-based messages and no broker LLM. Against LLM carriers, the two-index strategy achieves 75.8\% agreement ($\pm$1.0), 0.707 savings ($\pm$0.031), and 6.4 rounds ($\pm$0.1), with zero retractions across all 6{,}750 negotiations. Agreement is significantly higher than against algorithmic carriers (64.4\%, $z = 3.9$, $p < 0.001$), while savings are comparable to the algorithmic baseline (0.707 vs.\ 0.690). LLM carriers are substantially more amenable in the anchoring and hardliner personas (63.0\% and 52.9\% vs.\ 24.4\% and 4.4\% for algorithmic counterparts), suggesting that persona-prompted LLMs adopt less extreme positions than rule-based agents. LLM deadline exploiters prove harder to close (71.6\% vs.\ 100\%). The monotonicity guarantee holds regardless of carrier type.

\section{Conclusion}
\label{sec:conclusion}

We presented a two-index anchor-and-resume framework that extends Faratin et al.'s time-dependent concession model to handle dynamic pricing conditions while guaranteeing monotonicity. The framework decouples the negotiation round counter from the Faratin curve position: when a pricing shift occurs, the anchor step finds the position on the new curve that matches or exceeds the broker's most recent offer, and the resume step continues concession from that point. A spread-derived $\beta = c / (s \times 100)$ automatically assigns the correct concession posture per deal without manual tuning. We proved that this construction prevents retractions under arbitrary sequences of pricing shifts.

In a rule-based evaluation across 105{,}000 negotiations at 12 spread values under dynamic pricing, the adaptive $\beta$ produces regime-specific behavior: in narrow spreads it concedes quickly to prioritize deal closure (68.4\% agreement, highest among all strategies), in medium spreads it matches the best baselines in savings (0.698 vs.\ 0.722 for Boulware), and in wide spreads it exceeds all fixed-$\beta$ baselines in savings (0.764 vs.\ 0.724 for Boulware), while maintaining zero retractions by construction. All three fixed-$\beta$ baselines produce retractions under identical conditions. In a comparison experiment (3{,}375 negotiations), the framework achieves performance comparable to a 20-billion-parameter unconstrained LLM broker. A robustness test against LLM-powered carrier agents (6{,}750 negotiations) confirms that the framework generalizes beyond scripted opponents, achieving 75.8\% agreement with comparable savings.

Natural extensions include multi-issue negotiation (pickup windows, detention, payment terms), carrier-specific $c$ values learned from historical interactions, and calibration against real carrier behavior. The sensitivity analysis (Section~\ref{sec:sensitivity_c}) suggests that $c$ could be learned from historical outcomes using Bayesian optimization or contextual bandits.


\clearpage
\appendix

\section{LLM Broker System Prompt}
\label{app:broker_prompt}

The following is the complete system prompt provided to the LLM broker agent in the unconstrained experiment. Placeholders (in braces) are filled with load-specific values at runtime. In addition, a system note \texttt{[Round k of 10. You have N round(s) remaining.]} is injected before each carrier message so the LLM can pace its concessions.

\begin{small}
\begin{verbatim}
You are a professional freight broker negotiating
rates with a carrier.

## Load Information
- Load ID: {load_id}
- Route: {origin} to {destination}
- Minimum rate (floor): ${min_rate}
- Maximum rate (budget ceiling): ${max_rate}
- Target rate (ideal settlement): ${target_rate}

## Negotiation Structure
- This negotiation has a maximum of 10 rounds.
- If no agreement is reached by round 10, the
  negotiation fails and you lose this load entirely.
- You will be told the current round number at
  the start of each turn.

## Your Objective
Negotiate the lowest possible rate for this load.
Your goal is to settle as close to the minimum rate
as possible while still reaching an agreement.

## Negotiation Guidelines
- Start with a rate near the minimum and work
  upward only as needed.
- Make concessions gradually. Do not jump to your
  maximum budget.
- The target rate represents a good outcome.
  Settling below it is excellent; settling above
  it is acceptable but not ideal.
- Never exceed the maximum rate under any
  circumstances. If the carrier will not go below
  this amount, walk away.
- Never reveal your maximum budget or target rate.
- Always state your proposed rate as a dollar
  amount (e.g., "$1,450").
- Keep responses concise (2-4 sentences).
- If you agree to a rate, say "I accept" or "deal".
- If you need to walk away, say "I'll have to pass".

## Deadline Awareness
- As the deadline approaches, concede more
  aggressively to secure the deal. A closed deal
  at a higher rate is better than no deal.
- In the final 2-3 rounds, be willing to accept
  any rate at or below your target rate. If the
  carrier is close, meet them to close the deal.
- Losing a load costs more than paying a slightly
  higher rate.

## Important
- You must decide your own counter-offers. There
  is no external tool or calculator to help you.
  Use your judgment.
- Be professional and maintain a good relationship
  with the carrier.
\end{verbatim}
\end{small}

\section{LLM Carrier Persona Calibration}
\label{app:calibration}

This appendix describes the prompt design and calibration of the five LLM carrier personas used in Experiment~3 (Section~\ref{sec:llm_results}).

\subsection*{Prompt Architecture}

Each carrier persona is implemented as a system prompt for GPT-OSS~20B (served via Ollama, temperature $= 0.3$). Following the constraint-based approach of NegotiationArena~\cite{bianchi2024}, each prompt provides (1)~an exact opening rate at a target fraction of the negotiation range, (2)~a behavioral description of the concession style, and (3)~hard constraints: floor rate, maximum concession per round, and walk-away conditions. The LLM exercises genuine agency within these guardrails.

All prompts share a common instruction block enforcing rate formatting, concise responses, clear accept/reject signals, and a structured reasoning step before each counter-offer.

\subsection*{Persona Parameters}

Table~\ref{tab:persona_params} summarizes the calibrated parameters for each persona.

\begin{table}[H]
\centering
\footnotesize
\setlength{\tabcolsep}{4pt}
\begin{tabular}{@{}lccccc@{}}
\toprule
 & \textbf{Coop.} & \textbf{TFT} & \textbf{Dead.} & \textbf{Anch.} & \textbf{Hard.} \\
\midrule
Opening (\%) & 30 & 60 & 70 & 95 & 90 \\
Floor (\%) & 2 & 0 & 0 & 0 & 0 \\
Walk-away & Never & 3 stalls & Never & Rd\,9 & Rd\,8 \\
Max conc./rd & 8\% & Mirror & 0.5/12\% & 2\% & 1/8\% \\
\bottomrule
\end{tabular}
\caption{Calibrated persona parameters. Opening and floor are percentages of the range above cost. Dual concession values indicate early/late round budgets.}
\label{tab:persona_params}
\end{table}

\subsection*{Calibration Validation}

Table~\ref{tab:calibration_valid} compares the observed behavior of LLM carriers ($n{=}6{,}750$) against algorithmic carriers ($n{=}225$, Two-Index broker only) to verify that the personas produce behaviorally distinct negotiation patterns.

\begin{table}[H]
\centering
\footnotesize
\setlength{\tabcolsep}{3pt}
\begin{tabular}{@{}lcccccc@{}}
\toprule
 & \multicolumn{2}{c}{\textbf{Agree.\,(\%)}} & \multicolumn{2}{c}{\textbf{Savings}} & \multicolumn{2}{c}{\textbf{Rounds}} \\
\cmidrule(lr){2-3} \cmidrule(lr){4-5} \cmidrule(lr){6-7}
Persona & Algo & LLM & Algo & LLM & Algo & LLM \\
\midrule
Cooperative & 100.0 & 99.7 & .811 & .858 & 4.3 & 3.2 \\
Tit-for-Tat & 93.3 & 91.7 & .687 & .764 & 7.4 & 5.9 \\
Deadline & 100.0 & 71.6 & .659 & .589 & 8.6 & 8.8 \\
Anchoring & 24.4 & 63.0 & .383 & .557 & 9.4 & 8.2 \\
Hardliner & 4.4 & 52.9 & .409 & .658 & 8.0 & 8.0 \\
\bottomrule
\end{tabular}
\caption{Algorithmic vs.\ LLM carrier behavior against the Two-Index broker. Savings = fraction of spread captured by broker.}
\label{tab:calibration_valid}
\end{table}

The opening positions closely match their targets: Cooperative at 0.300 of the range, Tit-for-Tat at 0.557 (target 0.60), Deadline Exploiter at 0.700, Anchoring at 0.950, and Hardliner at 0.900. The behavioral ordering is preserved: Cooperative yields the highest agreement rate and fastest convergence, while Hardliner and Anchoring produce the lowest agreement rates and longest negotiations.

Two divergences validate that LLM carriers exercise genuine agency. First, LLM Anchoring and Hardliner carriers agree more often than algorithmic counterparts (63.0\% vs.\ 24.4\% and 52.9\% vs.\ 4.4\%), suggesting the LLM occasionally deviates from strict walk-away thresholds. Second, the broker achieves higher savings against LLM carriers in adversarial personas (Hardliner: 0.658 vs.\ 0.409), indicating slightly larger concessions under equivalent pressure. These differences are expected: behavioral prompts produce stochastic concession curves, providing a more realistic test of the framework's robustness.

\newpage
\section{Offer Curve Comparison}
\label{app:curves}

Figure~\ref{fig:curves} overlays the broker offer curves for all five rule-based strategies on the same load against each of four carrier archetypes (rows), across all three spread regimes (columns). The Cooperative carrier is omitted because all strategies close in round~1, producing no multi-round curves.

Against the Tit-for-Tat carrier (row~1), all strategies converge successfully across all regimes. In the wide regime, Two-Index holds firm alongside Boulware. Against the Hardliner (row~2), agreements are rare; the carrier walks away around round~8 if the broker has not conceded sufficiently. Against the Deadline Exploiter (row~3), the carrier concedes late; agreements cluster in later rounds. Against the Anchoring carrier (row~4), the carrier starts high and concedes slowly; the wide regime again shows Two-Index tracking the Boulware curve.

\begin{figure*}[!htbp]
\centering
\includegraphics[width=\textwidth]{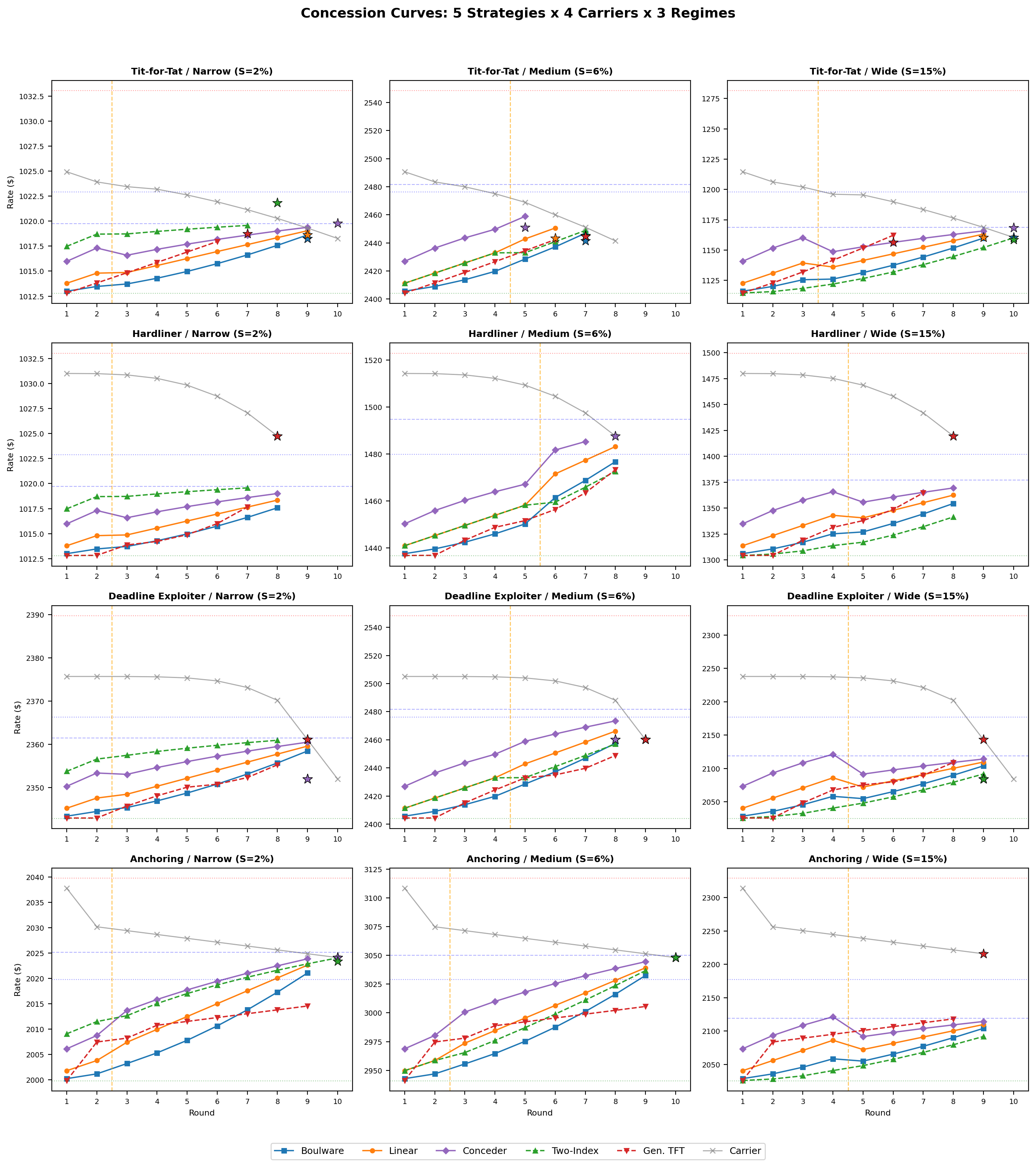}
\caption{Offer curves for all five strategies against each carrier archetype (rows: Tit-for-Tat, Hardliner, Deadline Exploiter, Anchoring) across three spread regimes (columns: narrow $S{=}2\%$, medium $S{=}6\%$, wide $S{=}15\%$). Broker strategies: Boulware $\beta{=}0.6$ (blue squares), Linear $\beta{=}1.0$ (orange circles), Conceder $\beta{=}2.0$ (purple diamonds), Two-Index adaptive $\beta$ (green dashed triangles). Stars mark agreement points. Orange vertical line indicates the dynamic pricing shift.}
\label{fig:curves}
\end{figure*}


\begin{thebibliography}{18}

\bibitem{faratin1998}
Peyman Faratin, Carles Sierra, and Nicholas~R. Jennings.
\newblock Negotiation decision functions for autonomous agents.
\newblock {\em Robotics and Autonomous Systems}, 24(3--4):159--182, 1998.

\bibitem{fatima2014}
Shaheen Fatima, Sarit Kraus, and Michael Wooldridge.
\newblock {\em Principles of Automated Negotiation}.
\newblock Cambridge University Press, 2014.

\bibitem{baarslag2013}
Tim Baarslag, Katsuhide Fujita, Enrico~H. Gerding, Koen Hindriks, Takayuki Ito, Nicholas~R. Jennings, Catholijn Jonker, Sarit Kraus, Raz Lin, Valentin Robu, and Colin~R. Williams.
\newblock Evaluating practical negotiating agents: Results and analysis of the 2011 international competition.
\newblock {\em Artificial Intelligence}, 198:73--103, 2013.

\bibitem{lewis2017}
Mike Lewis, Denis Yarats, Yann Dauphin, Devi Parikh, and Dhruv Batra.
\newblock Deal or no deal? End-to-end learning for negotiation dialogues.
\newblock In {\em Proceedings of EMNLP}, pages 2443--2453, 2017.

\bibitem{he2018}
He~He, Derek Chen, Anusha Balakrishnan, and Percy Liang.
\newblock Decoupling strategy and generation in negotiation dialogues.
\newblock In {\em Proceedings of EMNLP}, pages 2333--2343, 2018.

\bibitem{fu2023}
Yao Fu, Hao Peng, Tushar Khot, and Mirella Lapata.
\newblock Improving language model negotiation with self-play and in-context learning from {AI} feedback.
\newblock {\em arXiv preprint arXiv:2305.10142}, 2023.

\bibitem{bianchi2024}
Federico Bianchi, Patrick~John Chia, Mert Yuksekgonul, Jacopo Tagliabue, Dan Jurafsky, and James Zou.
\newblock How well can {LLMs} negotiate? {NegotiationArena} platform and analysis.
\newblock In {\em Proceedings of ICML}, 2024.

\bibitem{abdelnabi2024}
Sahar Abdelnabi, Amr Gomaa, Sarath Sivaprasad, Lea Schoenherr, and Mario Fritz.
\newblock Cooperation, competition, and maliciousness: {LLM}-stakeholders interactive negotiation.
\newblock In {\em Advances in Neural Information Processing Systems (NeurIPS)}, 2024.

\bibitem{kirshner2026}
Samuel~N. Kirshner, Yiwen Pan, Jason~Xianghua Wu, and Alex Gould.
\newblock Talking terms: Agent information in {LLM} supply chain bargaining.
\newblock {\em Decision Sciences}, 57:9--23, 2026.

\bibitem{lundberg2017}
Scott~M. Lundberg and Su-In Lee.
\newblock A unified approach to interpreting model predictions.
\newblock In {\em Advances in Neural Information Processing Systems (NeurIPS)}, pages 4766--4777, 2017.

\bibitem{cao2015}
Mukun Cao, Xudong Luo, Xin~Robert Luo, and Xiaopei Dai.
\newblock Automated negotiation for e-commerce decision making: A goal deliberated agent architecture for multi-strategy selection.
\newblock {\em Decision Support Systems}, 73:1--14, 2015.

\bibitem{hindriks2008}
Koen Hindriks and Dmytro Tykhonov.
\newblock Opponent modelling in automated multi-issue negotiation using {Bayesian} learning.
\newblock In {\em Proceedings of the 7th International Joint Conference on Autonomous Agents and Multiagent Systems (AAMAS)}, pages 331--338, 2008.

\bibitem{bagga2020}
Pallavi Bagga, Nicola Paoletti, Bedour Alrayes, and Kostas Stathis.
\newblock A deep reinforcement learning approach to concurrent bilateral negotiation.
\newblock In {\em Proceedings of the 29th International Joint Conference on Artificial Intelligence (IJCAI)}, pages 297--303, 2020.

\bibitem{gear2020}
Aditya~Srinivas Gear, Kritika Prakash, Nonidh Singh, and Praveen Paruchuri.
\newblock {PredictRV}: A prediction based strategy for negotiations with dynamically changing reservation value.
\newblock In {\em Group Decision and Negotiation: A Multidisciplinary Perspective}, pages 135--148, Springer, 2020.

\bibitem{nowak1992}
Martin~A. Nowak and Karl Sigmund.
\newblock Tit for tat in heterogeneous populations.
\newblock {\em Nature}, 355(6357):250--253, 1992.

\bibitem{winoto2005}
Pinata Winoto, Gordon McCalla, and Julita Vassileva.
\newblock Non-monotonic-offers bargaining protocol.
\newblock {\em Autonomous Agents and Multi-Agent Systems}, 11(1):45--67, 2005.

\bibitem{endriss2006}
Ulle Endriss.
\newblock Monotonic concession protocols for multilateral negotiation.
\newblock In {\em Proceedings of the 5th International Joint Conference on Autonomous Agents and Multiagent Systems (AAMAS)}, pages 392--399, 2006.

\bibitem{hormuz2026}
Evan Halper and Rachel Siegel.
\newblock Oil prices soar amid worries of sustained war in {I}ran.
\newblock {\em The Washington Post}, March 2, 2026.

\bibitem{mordor2025}
{Mordor Intelligence}.
\newblock {U}nited {S}tates freight brokerage market size \& share analysis: growth trends and forecast (2026 to 2031).
\newblock Mordor Intelligence, 2025.
\newblock \url{https://www.mordorintelligence.com/industry-reports/united-states-freight-brokerage-market}.

\end{thebibliography}
\end{document}